\newtheorem{theorem}{\textbf{\text{Theorem}}}
\newtheorem{approximation}{Approximation}
\newtheorem{remark}{Remark}
\definecolor{lightgray}{gray}{0.9}
\def\BState{\State\hskip-\ALG@thistlm}
\begin{document}

\title{Self-Organized Scheduling Request for Uplink 5G Networks: A D2D Clustering Approach}
\author{
	\IEEEauthorblockN{\large  Mohammad Gharbieh, Ahmed Bader, Hesham ElSawy, Hong-Chuan Yang, \\ Mohamed-Slim Alouini, and Abdulkareem Adinoyi
	 \thanks{ M. Gharbieh and  H.-C. Yang are with the Department of Electrical and Computer Engineering, University of Victoria, Victoria, BC V8P 5C2, Canada; e-mail: \{mohammadgharbieh, hy\}@uvic.ca. \;}
	 \thanks{A. Bader is with Insyab Wireless Limited, 1961 Dubai- UAE; e-mail: ahmed@insyab.com. \; } 
	 \thanks{  H. ElSawy was with the Computer, Electrical, and Mathematical Sciences and Engineering (CEMSE) Division, King Abdullah University of Science and Technology (KAUST), Thuwal 23955, Saudi Arabia.	He is now with the Electrical Engineering Department, King Fahd University of Petroleum and Minerals, Dhahran 31261, Saudi Arabia; e-mail: hesham.elsawy@kfupm.edu.sa.\;}	
	 \thanks{ M.-S. Alouini is with the Computer, Electrical, and Mathematical Sciences and Engineering (CEMSE) Division, King Abdullah University of Science and Technology (KAUST), Thuwal 23955, Saudi Arabia; e-mail: slim.alouini@kaust.edu.sa.  \;}	
 	 \thanks{A. Adinoyi is with Saudi Telecom Company (STC); e-mail: aadinoyi.c@stc.com.sa \; }
	 \thanks{The work of the KAUST team was supported in part by STC under grant RGC/3/2374-01-01.\; }
}
}

\maketitle
\thispagestyle{plain}
\pagestyle{plain}

\begin{abstract}

In one of the several manifestations, the future cellular networks are required to accommodate a \emph{massive} number of devices; several orders of magnitude compared to today's networks. At the same time, the future cellular networks will have to fulfill stringent latency constraints. To that end, one problem that is posed as a potential showstopper is extreme congestion for requesting uplink scheduling over the physical random access channel (PRACH). Indeed, such congestion drags along scheduling delay problems. In this paper, the use of self-organized device-to-device (D2D) clustering is advocated for mitigating PRACH congestion.  {To this end, the paper proposes two D2D clustering schemes, namely; Random-Based Clustering (RBC) and Channel-Gain-Based Clustering (CGBC). Accordingly, this paper sheds light on random access within the proposed D2D clustering schemes and presents a case study based on a stochastic geometry framework. For the sake of objective evaluation, the D2D clustering is benchmarked by the conventional scheduling request procedure. Accordingly, the paper offers insights into useful scenarios that minimize the scheduling delay for each clustering scheme. Finally, the paper discusses the implementation algorithm and some potential implementation issues and remedies}. 

\begin{IEEEkeywords}
LTE cellular networks, self-organized networks, D2D clustering, random access, stochastic geometry.
\end{IEEEkeywords}
\end{abstract}

\vspace{-2mm}
\section{Introduction}
\vspace{-1mm}
\IEEEPARstart{T}{he} next generation of cellular networks is expected to involve a massive number of connected devices varying from sensors, smart objects, machines, all the way to smartphones and vehicles \cite{what.will.5g.be}.  For future networks to enable a broad spectrum of new usage and applications, the cellular infrastructure must support a mixture of human-type and machine-type communications with ever-increasing traffic levels. In fact, 5G networks are expected to handle a 1000-fold increase in capacity \cite{7390494}, an appreciable portion of which is uplink traffic \cite{first.mile.KAUST}. Within this context, a primary challenge pertains to the uplink scheduling request that is performed via random access ({\rm RA}) procedure over the physical RA channel (PRACH). Particularly, devices with uplink traffic need to go through {\rm RA} procedures over the PRACH to request resource allocation from the base station ({\rm BS})~\cite{sesia2009lte}. As the number of devices grows, contention over scarce PRACH resources escalates substantially thus leading to a large number of devices dropping off the {\rm RA} process, and high volume of unserved traffic demand as discussed in \cite{first.mile.KAUST} and illustrated through the experimental data shown in Fig.~\ref{fig:traffic}. The figure shows that while there are enough resources to schedule more uplink traffic, such resources are wasted because the devices fail to pass their scheduling request to the BS through the PRACH. Hence, it is clear that {\rm RA} scheduling requests lead to congestion that needs to be alleviated to fulfill the foreseen 5G performance.

\begin{figure}[t!]
\begin{center}
\includegraphics[width=3.4 in]{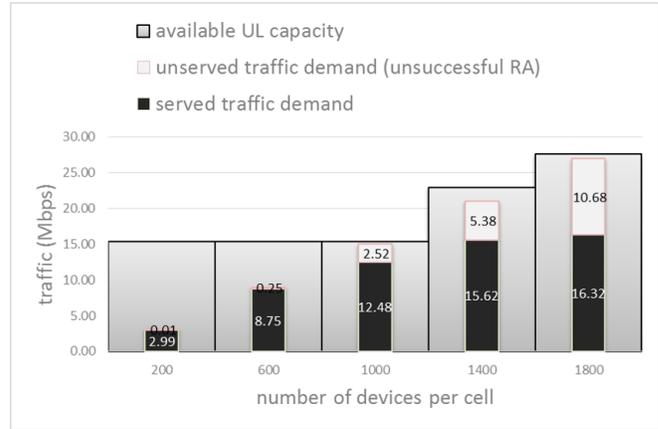}	
\end{center}
\vspace{-3mm}
\caption{: At low to moderate device counts, minimal to negligible levels of collisions occur on the {\rm RA} radio resources. However, mobile networks are already past this phase, i.e. the number of devices in a small cell is rapidly growing beyond 1000 \cite{first.mile.KAUST}.}
\label{fig:traffic}
\end{figure}

The most straightforward proposition to alleviate {\rm RA}  congestion is to simply allocate additional radio resources for PRACH. This option obviously reduces the available resources for scheduling uplink data traffic. Moreover, allocating spectrally adjacent blocks for {\rm RA}  increases computational complexity at the {\rm BS} side due to parallelized processing~\cite{sesia2009lte}. As such, it is not an appealing solution for the vendors. Another obvious proposition is to densify {\rm BS} deployments as a mean to reduce congestion. Nonetheless, densification makes sense to mobile network operators only up to a certain limit. Beyond that, it ceases to offer either economic benefit \cite{densification.economics} or performance improvement \cite{Ultra_Dense_AlAmmouri}. From the economic perspective, right-of-way and site acquisition costs may become major challenges. From the performance perspective, there is a critical density after which the coverage probability and rate degrade with BS density due to the overwhelming inter-cell interference. Another drawback for network densification is the increased handover rate for mobile users which consumes physical resources and incurs a delay~\cite{Rabe1,Rabe2}.

To this end, a distributed self-organized RA procedure is better positioned to accommodate this tremendous uplink demand in the future cellular networks. Indeed,  standards for Long Term Evolution (LTE) have identified self-organization as a vital requirement for future networks \cite{6157579}. The self-organized random access can response to actual network variations in near-real-time. Moreover, the self-organization random access is less costly since it entails the use of significantly less resource allocation complexity and administrative overhead.  

\begin{figure}[t!]
	\begin{center} 
    \vspace{-7mm}
	\hspace*{-30mm}
		\includegraphics[width=5.5in]{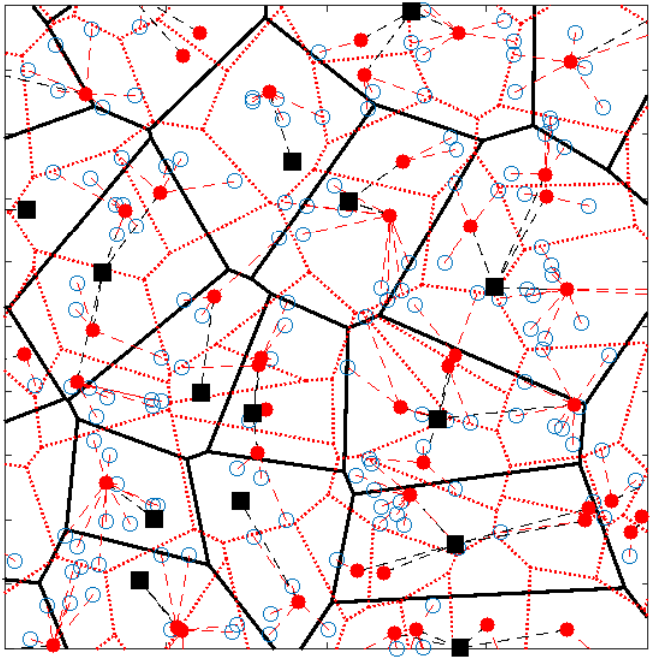}	
	\end{center}
	\vspace{-12mm}
	\caption{: Network realization for the system model for Devices-to-BS ratio $\alpha=16$ and cluster head selection probability {$\delta=0.25$}. The {\rm BS}s are denoted by black squares, {\rm CH}s are denoted by the red circles, and the {\rm CM}s are denoted by blue circles. The Voronoi cell of the {\rm BS}s are denoted by the solid black lines while the Voronoi cell of the {\rm CH}s are denoted by the dotted red lines. The black dashed lines denote the associations of the {\rm CH}s to the BSs while the red dashed line denote the associations of the {\rm CM}s to the {\rm CH}s. In this model, uplink and downlink interfaces can be decoupled, i.e., downlink traffic and signaling can be transmitted directly by the {\rm BS} to the {\rm CM}s.}
	\label{fig:scheme}
\end{figure}
\vspace{-2mm}
\subsection{Prior Work}
\vspace{-1mm}
{ Device-to-device (D2D) relaying has been classically exploited within LTE networks, i.e., in-band D2D, mainly as a coverage improvement solution \cite{D2D.survey}. A corollary to coverage enhancements is indeed a boost in throughput or the spectrum efficiency through traffic offloading from cellular networks. In the stochastic geometry literature, different network architectures and systems were proposed to study and assess the spectrum sharing of in-band D2D communication \cite{D2D_jeff,6928445,7073589,Gjess_D2D,7548294,d2d_pca}.  The authors in \cite{d2d_Harpreet} analyze out-band D2D for uniform and k-closest  content availability in terms of the coverage probability and the area spectral efficiency. Moreover, \cite{d2d_economic} studies the economic aspect of downlink traffic offloading via D2D for in-band and out-band operating modes. Furthermore, \cite{d2d_Haenggi} develops an approach to model single- and multi-cluster wireless networks and study the coverage probability for closest-selection and uniform-selection strategies. It is worth to highlight that the model in \cite{d2d_Haenggi} is suitable for downlink cellular networks or ad hoc networks. However, the idea of aggregating the uplink generated traffic within a cluster using out-band D2D communication in the presence of  the cellular networks was investigated in \cite{Asadi:2013:CIO:2507924.2507929,7248779,6686520}.} The authors of \cite{Asadi:2013:CIO:2507924.2507929} provide analytical expressions for the throughput and power consumption for a point-to-point scenario. In \cite{7248779} the authors studied the latency-power trade-off of aggregating traffic on D2D links, where they showed  that the transmit power can be reduced but at the expense of higher latency. Furthermore,  \cite{6686520} defines a protocol stack for the D2D communication in cellular networks, and uses system-level simulation to show the throughput improvement. However, none of the proposed protocols state the criterion and/or the effective scenarios to activate the D2D communication \cite{D2D.survey}. Most notably, D2D relaying literature rarely touches upon its advents relaying scheduling request to relieve {\rm RA} congestion over the PRACH. While it may be quite intuitive, but a proper quantification of such an advantage is still missing out from literature.

\vspace{-2mm}
\subsection{Contributions}
\vspace{-1mm}
{This paper\footnote{This work is presented in part in \cite{8254606}.} presents an out-band D2D relaying setup (e.g., WiFi Direct~\cite{6686520}) that can be exploited to boost the {\rm RA} performance and LTE network capacity as well in dense networks. Within the context of this paper, the D2D paradigm refers to the situation where a number of devices cluster themselves together through an out-band link and assign a cluster head (CH)  as depicted in Fig.~\ref{fig:scheme}.  Uplink scheduling requests from cluster members (CM)s are forwarded to the assigned CH over unlicensed spectrum. The CH aggregates the requests from the CMs into larger ensembles and transmits one {\rm RA}  request per ensemble over the LTE interface. The BS process the {\rm RA} request from the {\rm CH}s and sends the uplink resources scheduling to the {\rm CM}s directly through the downlink signaling.  Without any doubt,  such clustering  relaxes the congestion over the LTE {\rm RA} resources since the number of LTE {\rm RA}  requests is reduced, and hence, reduces the latency for resource allocation over the LTE interface. The problem is not trivial though. One has to consider whether the average access delay perceived by a device is actually enhanced by virtue of D2D clustering or not. This has to be evaluated in light of intra- and inter-cell interference. Indeed, this is the crux of the study carried out in this paper. The contribution of this work can be summarized as follows:
\begin{enumerate}
	\item The paper proposes a self-organized D2D clustering in which each CH acts as a virtual Access Point ({\rm AP}) over an LTE connection to boost the {\rm RA} performance.
	\item The paper considers two CH assignment mechanisms:
		 \begin{itemize}
		  	\item Random-Based Clustering (RBC) in which each device is assigned randomly with probability ($\delta$) to be a CH.
		  	\item Channel-Gain-Based Clustering (CGBC) in which only the devices with channel gain greater than a threshold ($\tau$) are assigned to be CHs.
		 \end{itemize}
	\item We present analytical expressions, based on stochastic geometry which takes into account the spatial intra and inter-cell/cluster interference sources to assess the transmissions success probabilities. Consequently, we quantify the average uplink scheduling delay for the D2D clustering.
	\item The proposed self-organized D2D clustering scheme is benchmarked by the conventional {\rm RA}  procedure where all the devices have to send {\rm RA} request to the BS over the LTE interface.
	\item We quantify the critical device density beyond which the self-organized D2D clustering, for relaying uplink scheduling requests,  offers performance gains (i.e., reduction in channel access delay).
	\item For the range of device densities where D2D is feasible we answer the following crucial question: How to activate the D2D relaying and how should the {\rm CH} be assigned? i.e., what are the suitable design parameters for these setups?
\end{enumerate}
 The results show that for the RBC and low device intensity is actually better to follow the conventional {\rm RA}  procedure. However, the self-organized D2D relaying scheme starts to pay off as the intensity grows. On the other hand, when the channel gains are considered in the CGBC scheme, D2D clustering provides higher delay reduction. Moreover, there is an optimal {\rm CH} selection probability ($\delta$) or a channel gain threshold ($\tau$) that minimizes the average delay for every device intensity.}
\vspace{-2mm}
\subsection{Notation \& Organization}
\vspace{-1mm}
Throughout the paper, we use the math italic font for scalars, e.g., $x$. We use the calligraphic font, e.g.,  $\mathcal{X}$ to represent a random variable (RV) while the math typewriter font, e.g., $\mathtt{x}$ is used to represent its instantiation. Moreover, $\mathbb{E}_ \mathcal{X}\{\cdot\}$, $F_{\mathcal{X}}\left(\cdot \right) $,   $\bar{F}_{\mathcal{X}}\left(\cdot \right)$, and $\mathscr{L}_\mathcal{X}\left(\cdot \right)$ denote, respectively, the expectation, the cumulative distribution function (CDF), the complementary cumulative distribution function (CCDF), and the Laplace Transform (LT) of the PDF of the random variable $\mathcal{X}$. We use $\mathbb{P} \{\cdot\}  $ to denote the probability. $\mathbbm{1}_{\{ \cdot \}}$ is the indicator function which has value of one if the statement $\{ \cdot \}$ is true and zero otherwise. $\Gamma(\cdot)$ indicates the Gamma function and ${}_2 F_1(.)$ is the Gaussian hypergeometric function. The imaginary unit is denoted by $j=\sqrt{-1}$ and imaginary component of a complex number is denoted as $\text{Im}\{.\}$. {Lastly, $y^*$ denotes the optimal value of $y$.}

The rest of the paper is organized as follows.  Section \ref{sec:Protocol} points out a high-level protocol description for the D2D clustering scheme. Section \ref{sec:System_Model} models the physical layer attributes of the  communication system and highlights the performance metrics. Section \ref{sic:Performance_Analysis} characterizes the D2D clustering protocols, while Section \ref{Results} provides the numerical results and insights. Section \ref{sec:Imp} sheds light on some implementation obstacles and pertinent remedies and recommendations. Finally, Section \ref{sec:Conclusions} summarizes and concludes the paper.

\vspace{-2mm}
\section{Overiew of the Protocol}\label{sec:Protocol}
\vspace{-1mm}
The goal of this section is not to define an exhaustive protocol stack for self-organized  D2D clustering within LTE networks as in \cite{6686520}. Rather, the aim is to point out a high-level description that helps to digest the presented scheme. The self-organized D2D clustering process can be summarized as follows:
\begin{enumerate}
	\item {\bf D2D Clustering Initiation Order}: the {\rm BS} broadcasting this order along with the chosen value of the {\rm CH} selection probability ($\delta$) over the downlink signaling.
	\item {\bf{\rm \bf CH} Selection}: for a given target fraction (i.e., $\delta$) of devices that are required to act as {\rm CH}s, the clustering for each scheme is performed as follows:
	
	\begin{itemize}
		\item for the RBC scheme, each device has a probability $\delta$ to be a {\rm CH}. The selection can be done in a distributed manner, i.e., without the control of the {\rm BS}. This can be done via a generating a random number $\mathtt{v} \in [0,1]$, and hence, the device becomes a {\rm CH} if $\mathtt{v} \leq \delta$.
		
		\item for the CGBC scheme, each device has to estimate its channel over the LTE interface and the device becomes a {\rm CH} if the channel gain is greater than $\tau$, where $\tau$ is identified such that the fraction $\delta$ of devices are selected as CHs. Let $\mathtt{h}$ be the channel gain, then $\tau$ is identified through the inverse of the CCDF of $\mathtt{h}$ as $\tau = \bar{F}_{\mathtt{h}}^{-1}(\delta)=-\ln(\delta)$. Therefore, the selection can be done in a distributed manner as well. 
	\end{itemize}
	
	\item {\bf {\rm \bf CH} Announcement}: Once the {\rm CH}s are identified each {\rm CH} selects a frequency channel in the D2D spectrum and broadcasts its D2D-Identification (D2D-ID) to declare itself as a {\rm CH}.
	\item {\bf Cluster Formation}: each {\rm CM} scans the D2D channels searching for CHs broadcast messages. The {\rm CM}s associate themselves to their nearest {\rm CH}, by measuring the received signal strength (RSS) and selecting  {\rm CH} with the highest RSS. Through the association phase, the {\rm CM}s send their D2D-ID along with their LTE-ID, e.g., SAE-Temporary Mobile Subscriber Identity (S-TMSI).
	\item {\bf{ Cluster Registration at {\rm \bf BS}}}: each {\rm CH} sends the D2D-LTE-ID association table to its serving {\rm BS} via uplink LTE channel. Such a table is vital for the uplink resources scheduling so that the {\rm BS} transmits the downlink signaling directly to the {\rm CM}s.
\end{enumerate}

{Upon the D2D cluster formation, the {\rm CM}s relay their uplink scheduling requests via the {\rm CH}. The {\rm CH} in turn, stamps the requests by the D2D-ID of the {\rm CM}s, then aggregates all the {\rm RA} requests from the {\rm CM}s with its own request into a larger ensemble, and transmits one scheduling request via {\rm RA} on the shared LTE PRACH for each ensemble. }

\vspace{-2mm}
\section{System Modeling \& Assumptions}\label{sec:System_Model}
\vspace{-1mm}
After describing the clustering process from a protocol point of view, this section portrays the modeling attributes of the proposed self-organized clustering schemes from a physical layer point of view.  

\vspace{-2mm}
\subsection{Spatial \& Physical Layer Parameters}
\vspace{-1mm}
A two-tier cellular network is considered, namely the out-band D2D and LTE networks. Due to the disjoint spectrum allocation, the interference interactions on each network are decoupled. Since the cellular networks topologies from one location to another tend to be random, stochastic geometry is utilized to model the spatial distribution of the BSs as a point processes \cite{7733098,6620915,andrews2011tractable}. In this regards, the Poisson point process (PPP) is widely accepted and utilized due to its simplicity and practical relevance~\cite{andrews2011tractable,load_aware_jeffrey,6566864,6620915}.\footnote{ { Note that logical clustering does not change the physical locations of the devices. Hence, the PPP distribution of the devices is preserved.}} Therefore, we assume that the  BSs and the devices are spatially distributed according to two independent homogeneous PPPs  with densities $\lambda$ and $\mu$, respectively. A power-law path-loss model is considered where the signal power decays at a rate of  $r^{-\eta}$ with the propagation distance $r$, where $\eta>2$ is the path-loss exponent. In addition to the path-loss attenuation, Rayleigh block fading  is assumed within a multi-path environment, in which all the channel power gains ($\mathtt{h}$) are assumed to be independent of each other and are identically and exponentially distributed with unity power gain.\footnote{ {Note that in the case of physically clustered devices, fading is correlated.  }}

\vspace{-2mm}
\subsection{Random Access over the LTE Network}
\vspace{-1mm}
Each {\rm CH} should go through the {\rm RA}  process over the PRACH to request uplink channel access from the {\rm BS}~\cite{sesia2009lte}. The {\rm RA} process is uncoordinated and all devices can mutually  interfere with one another, which may lead to intra-cell interference in addition to the inter-cell interference. Only the {\rm CH}s are eligible to request uplink resources over the LTE interface (i.e., PRACH) from the nearest {\rm BS}. To request an uplink channel access, each  {\rm CH} randomly and independently transmits its {\rm RA} request on one of the available prime-length orthogonal Zadoff-Chu (ZC) codes defined by the LTE  PRACH preamble \cite{sesia2009lte}.

During the  {\rm RA}, each {\rm CH} uses full path-loss inversion power control with target power level $\rho_{\text{L}}$  \cite{sesia2009lte}.  Therefore, the {\rm RA} transmit power is expressed by $\mathcal{P}_{\text{{\rm RA}}}=  \rho_{\text{L}} \mathcal{R}_\text{L}^{\eta}$, where $\mathcal{R}_\text{L}$ is the distance between the {\rm CH} and its geographically closest {\rm BS}. That is, the \rm CH controls its transmit power such that the average signal power received at its serving {\rm BS} is equal to $\rho_{\text{L}}$. The target power level $\rho_{\text{L}}$ is assumed to be conveyed on downlink signaling channels by the {\rm BS}. {It is assumed that the BSs are dense enough such that each of the {\rm CH}s can invert its path-loss towards the closest BS almost surely, and hence the maximum transmit power of the IoT devices is not a binding constraint for packet transmission. Extension to fractional power control and/or adding a maximum power constraint can be done by following the methodologies in \cite{uplink2_jeff} and \cite{uplink_alamouri}.} An {\rm RA} transmission is assumed to be decodable if the signal to interference and noise ratio (SINR), denoted by $ \Upsilon_{\text{{\rm RA}}}$, is greater than a certain threshold $\theta_{\text{{\rm RA}}}$.

\vspace{-2mm}
\subsection{D2D Clustering Over an Unlicensed Spectrum}
\vspace{-1mm}
The clustering process is initiated by the {\rm BS} where the clustering criterion, i.e., RBC or CGBC, is dictated by the {\rm BS}. Each {\rm CM} associates with its nearest {\rm CH} through a single-hop link and relays the uplink scheduling requests via that link as shown in Fig.~\ref{fig:scheme}. {\rm CM}s are assumed to employ full path-loss power control with target power level $\rho_{\text{C}}$. Therefore, the transmit power is given by $\mathcal{P}_{\text{D}}= \rho_{\text{C}}\mathcal{R}_{\text{C}}^{\eta}$, where $\mathcal{R}_{\text{C}}$ is the distance between the {\rm CM} and its geographically closest CH. The target power level $\rho_{\text{c}}$ is assumed to be conveyed along with the D2D-ID of the {\rm CH} in step 3 in Section~\ref{sec:Protocol}.

Each {\rm CH} randomly and independently selects one of the $k$  available channels dedicated for D2D communications within the unlicensed spectrum. Moreover, transmission from CMs over the D2D interface is assumed to be managed by the CH via a time division multiple access (TDMA) schedule. Hence, intra-cluster interference is prohibited and only inter-cluster interference exists.  Obviously, this comes at the expense of access delay that grows with the size of the cluster. For a correct transmission at the D2D link, an SINR capture model is adopted such that a transmission can be decoded if the SINR, denoted by $\Upsilon_\text{C}$, is greater than a certain threshold $\theta_{\text{C}}$.

\vspace{-2mm}
\section{Performance Analysis} \label{sic:Performance_Analysis}
\vspace{-1mm}
Latency, or channel access delay to be more precise, is very important for several 5G application (e.g., tactile internet \cite{tactile,low_latency}). It also has been a key aspect in the design objectives of cellular systems. Therefore, the channel access (i.e., resource allocation) delay is the primary metric used in this study to evaluate the gain of the self-organized D2D clustering scheme in reducing congestion over {\rm RA} resources. Before delving into the analysis, we state the following important {approximations} that will be utilized in this paper. 

{\begin{approximation} \label{app1}
		The spatial correlations between proximate devices, in terms of transmission power, can be ignored.
	\end{approximation}
	\begin{remark} \label{rem1}
		It is well known that the sizes of adjacent Voronoi cells are correlated. Such correlation affects the number of devices, as well as, the service distance realizations in adjacent Voronoi cells. Consequently, the transmission powers at adjacent cells are correlated. Accounting for such spatial correlation would impede the model tractability. Hence, we follow the common approach in the literature and ignore such spatial correlations when characterizing the aggregate interference~\cite{uplink_alamouri,Meta_Haenggi_Control, Meta_Elsawy, Gharbieh_tcom,elsawy2014stochastic, marco_uplink, uplink2_jeff, 6516885}. However, all spatial correlations are intrinsically accounted for in the Monte Carlo simulations that are used to validate our model in Section~\ref{Results}.
	\end{remark}
	
	{\begin{approximation} \label{app2}
			For the D2D transmission, the point processes of inter-cluster interfering CMs seen at the test CH is {modeled by} a non-homogenous PPP.
	\end{approximation}}
	
	\begin{remark} \label{rem2}
		Despite that a PPP is used to model the complete set of CMs, the subset of scheduled CMs for the TDMA transmission is not a PPP. The constraint of scheduling one CM per Voronoi cell of the cluster leads to a Voronoi-perturbed point process for the set of mutually interfering CMs. Approximation~\ref{app2} is commonly used in the literature to maintain tractability~\cite{uplink_alamouri, Meta_Elsawy, elsawy2014stochastic, marco_uplink, uplink2_jeff, Gharbieh_tcom, 6516885}.
	\end{remark}
	
	{\begin{approximation} \label{app3}
			The transmission success probabilities of all  devices in the network are assumed to have a negligible temporal correlation. 
	\end{approximation}}
	
	\begin{remark} \label{rem3}
		The full path loss inversion makes the received signal power at the serving BSs/CHs independent from the service distance (i.e., the distance between the device and the serving BS/CH). Hence, the different realizations of the service distance across the devices do not affect the SINR. Furthermore, the random channel selection randomizes the set of interfering devices over different time slots, which decorrelate the interference across time. Hence, all devices in the network tend to have a negligible temporal correlation for the transmission success probabilities as shown in~\cite{Meta_Haenggi_Control, Meta_Elsawy, Gharbieh_tcom}.
\end{remark}}

It is worth mentioning that Approximations 1-3 are mandatory for tractability, regularly used in the literature, and are validated in Section~\ref{Results} via independent Monte-Carlo simulations. Based on these approximations, the  D2D cluster size, the  {\rm RA} success probability, D2D success probability, and the channel access delay are presented in, respectively, Section~\ref{sec:PDF}, Section~\ref{Sec:RA_success}, Section~\ref{Sec:D2D}, and Section~\ref{Sec:delay}. {For a quick reference, the notation used in this paper is summarized in Table~\ref{table_notation}.}

\begin{table}[t!]
	\centering
	\footnotesize
	\renewcommand{\arraystretch}{1.3}
	\begin{tabular}{||p{1cm}|p{6.5cm}|} 
		\hline  \textbf{Notation} & \textbf{Description} \\
		\hline $\mu$; $\lambda$; $\alpha$ & device density; {\rm BS} density; devices-to-BS ratio  \\
		\hline $\delta$; $\tau$ & {\rm CH} selection probability; channel gain threshold for CGBC   \\
		\hline $\theta_{\text{{\rm RA}}} $ & Detection threshold for successful {\rm {\rm RA}}  \\
		\hline $\theta_{\text{C}} $ &  Detection threshold for successful D2D transmission  \\
		\hline  $\rho_{\text{L}}$; $\rho_{\text{c}}$  &  Power control parameter for {\rm {\rm RA}}; Power control parameter for D2D   \\
		\hline $P_{\text{{\rm RA}}}$; $P_{\text{C}}$   & RA success probability; D2D transmission success probability \\
		\hline ${n_{Z}}$ & number of ZC codes dectitated for random access \\	
		\hline $k$ &  number of frequencies available for D2D transmission \\
		\hline $\eta$; $\sigma^2$ & path-loss exponent; noise power  \\		
		\hline
	\end{tabular}
	\caption{{: Summary of Notation}}\label{table_notation}
\end{table} 

\vspace{-2mm}
\subsection{D2D Cluster Size} \label{sec:PDF}
\vspace{-1mm}
Since the adopted clustering mechanisms are independent among all the devices, and by exploiting the independent thinning property of the PPP  \cite{martin_book}, the {\rm CH}s constitute a PPP with intensity $\delta \mu$ \cite{martin_book}. Similarly, the {\rm CM}s constitute a PPP with intensity $(1-\delta) \mu$. Moreover, due to the nearest {\rm CH} association, the footprint of each {\rm CH} can be expressed by a Voronoi cell with size and shape depending on the locations of its neighboring {\rm CH}s as depicted in Fig.~\ref{fig:scheme}. Therefore, the number of {\rm CM}s associated to each {\rm CH} is random. Let $\mathcal{N}$ denote the number of {\rm CM}s served by a generic {\rm CH}, following  \cite{6576413}, the probability mass function of  $\mathcal{N}$ is given by:
\setcounter{equation}{0}
\begin{align}\label{pdf_users}
\mathbb{P}\{\mathcal{N} = \mathtt{n}\} \approx \frac{\Gamma(\mathtt{n}+c)}{\Gamma(\mathtt{n}+1)\Gamma(c)} \frac{((1-\delta) \; \mu)^\mathtt{n} (\delta \; \mu  c)^c}{((1-\delta) \; \mu+\delta \; \mu c)^{\mathtt{n}+c}},
\end{align}
where $c=3.575$ is a constant related to approximate the PDF of the PPP Voronoi cell area in $\mathbb{R}^2$. Let $ \tilde{\delta}=(1-\delta)/\delta$, then \eqref{pdf_users} can be rewritten as:
\begin{align}\label{pdf_users1}
\mathbb{P}\{\mathcal{N} = \mathtt{n}\} \approx \frac{\Gamma(\mathtt{n}+c)}{\Gamma(\mathtt{n}+1)\Gamma(c)} \left( \frac{\tilde{\delta} }{ \tilde{\delta} +c}\right)^\mathtt{n} \left( \frac{c }{ \tilde{\delta} +c}\right)^c.
\end{align}
It is worth noting that the D2D cluster size depends only on the  {\rm CH} selection probability ($\delta$) and is independent of the intensity of the devices $\mu$. As such, $\delta$ is a key performance factor that the protocol designers can use to optimize the delay. From \eqref{pdf_users1}, it can be shown that average cluster size $\mathbb{E}\left[\mathcal{N}\right]= \tilde{\delta}$.

\newcounter{TepEquCoun}
\setcounter{TepEquCoun}{\value{equation}}
\setcounter{equation}{8}
\begin{figure*}[!b]
	\hrulefill
	\small
	\begin{align}\label{eq:CDF_I}
	\!F_{\mathcal{I}}\left(x \right)= &\frac{1}{2}\!-\!\frac{1}{\pi}\!\! \int\limits_{0}^{\infty} \! \frac{1}{t} \text{Im} \left\{ \exp\left\{ -j \;t \;x  \right\}{\exp\left\{\!\!\ \! - \! 2 \; \delta \; \alpha \; \rho^{2/\eta}_{\text{L}} \! \! \! \int\limits_{\rho_{\text{L}}^{\frac{-1}{\eta}}}^{\infty}\! \! \!\left(\!1\! - \! \frac{\exp\{j \; \tau \;t \;z^{-\eta }\}}{-j\;t\;z^{-\eta} +1}\!\right)z \; dz \right\}}{ \left(  1+  \frac{\delta \; \alpha ((1-j\;t\;\rho_{\text{L}})-\exp\{j\tau \;t\;\rho_{\text{L}}\})} {(1-j\;t\;\rho_{\text{L}}) c}\right)^{-c} } \right\} dt. 
	\end{align}
	\normalsize
\end{figure*}

\vspace{-2mm}
\subsection{{\rm RA} Success Probability } \label{Sec:RA_success}
\vspace{-1mm}
Let  $P_{\text{{\rm RA}}}=\mathbb{P}\left\{ \Upsilon_{\text{{\rm RA}}}\! > \theta_{\text{{\rm RA}}} \right\}$  denote the probability that the {\rm CH}'s {\rm RA} attempt over the LTE interference is successful. As such, the SINR for the {\rm RA} ($\Upsilon_{\text{{\rm RA}}}$) can be computed as follows: 
\setcounter{equation}{2}
\begin{align}
\Upsilon_{\text{{\rm RA}}}&=\frac{\rho_{\text{L}} \; \mathtt{h}_{\circ}}{{\sigma^2+\underset{ m  \in \tilde{{\Phi}}}{\sum} \mathtt{P}_{\text{{\rm RA}}m} \ \mathtt{h}_m \ \mathtt{R}_m\!^{-\eta}}},
\end{align}
where  $\mathtt{h}_{\circ}$ represents the LTE channel gain between the test {\rm CH} and its associated {\rm {\rm BS}}. $\sigma^2$ denotes the noise power and $\eta$ denotes the path-loss exponent. The set $\tilde{\Phi}$ contains all the interfering {\rm CH}s that are simultaneously performing {\rm RA} over the same ZC code, which may contain intra-cell and inter-cell interferes due to the uncoordinated nature of the {\rm RA}.  $\mathtt{P}_{\text{{\rm RA}}m}, \ \mathtt{h}_m,\  \text{and} \ \mathtt{R}_m $ represent, respectively, the transmit power, the channel gain, and the distance between the interfering {\rm CHs} and the associated {\rm {\rm BS}} of the test {\rm CH}.  Due to the uncoordinated nature of the {\rm RA}, there are two possible sources of interferences, namely the intra-cell interference and the inter-cell interference. Using stochastic geometry,  we  characterize the intra-cell and inter-cell interference on a test device via the LT of their probability density functions (PDFs). Then, the obtained LTs are used to derive  the {\rm RA} success probability, which is characterized in terms of {\rm CH} selection probability ($\delta$) and the device-to-{\rm BS} ratio ($\alpha$), i.e., $\alpha=\mu/\lambda$ .

{From the independent thinning property of the PPP~\cite{martin_book}, the {\rm CH}s interfering on the same ZC code constitute a PPP  with intensity $\frac{ \delta \; \mu}{{n_{Z} }}$, where  $n_{ Z}$ is the number of available ZC codes. Consequently,  the average number of {\rm CH}s that may use the same ZC code per {\rm BS} is given by ${\tilde{\alpha}} =\frac{\delta \; \mu}{\lambda \; n_Z} $.} By the PPP assumption, the number and locations of the points in disjoint areas are independent.  Consequently, the intra-cell and inter-cell interference are independent. Exploiting this fact, the success probability for each of the clustering schemes is given in the sequel.
\subsubsection{RBC Scheme} The RA success probability for the RBC can be expressed as 
\begin{align}
P_{\text{{\rm RA}}} =&  \mathbb{P} \left\{ \frac{\rho_{\text{L}}\; \mathtt{h}_{\circ}}{\sigma^2 + \mathcal{I}_{\text{In}}+\mathcal{I}_{\text{Out}}}>\theta_{\text{{\rm RA}}} \right\}  \notag \\
\overset{(a)}{=} & \exp\left\{- \frac{\sigma^2 \theta_{\text{{\rm RA}}}}{\rho_{\text{L}}} \right\} \mathscr{L}_{\mathcal{I}_{\text{In}}} \left(\frac{\theta_{\text{{\rm RA}}}}{\rho_{\text{L}}} \right) \mathscr{L}_{\mathcal{I}_{\text{Out}}} \left(\frac{\theta_{\text{{\rm RA}}}}{\rho_{\text{L}}} \right){,}
\label{SINR_RA_1}
\end{align}
where ${\mathcal{I}_{\text{In}}}$ is the intra-cell interference and ${\mathcal{I}_{\text{Out}}}$ is the inter-cell interference. Note that $(a)$ in \eqref{SINR_RA_1} follows from the exponential distribution of $\mathtt{h}_{\circ}$ \cite{survey_h}. The {\rm {\rm RA}} access success probability in \eqref{SINR_RA_1} is characterized with the following theorem.

\begin{theorem}\label{lemma_RA_success}
	The {\rm {\rm RA}} access success probability  in a PPP network and Random-Based Clustering where each {\rm CH} employs full path-loss inversion power control is given by:
	\small
	\begin{align}\label{eq:RA_success}
	{P_{\text{{\rm RA}}}} &\; {\approx} \; \frac {\exp\left\{- \frac{\sigma^2 \theta_{\text{{\rm RA}}}}{\rho_{\text{L}}} -  \frac{2 \; \delta\;  \alpha \; \theta_{\text{{\rm RA}}}}{n_Z (\eta-2) } \;{}_2F_1\left(1,1-\frac{2}{\eta},2-\frac{2}{\eta},-\theta_{\text{{\rm RA}}}\right) \right\}}{ \left(  1+  \frac{\delta \;\alpha \;\theta_{\text{{\rm RA}}} }{n_Z (1+\theta_{\text{{\rm RA}}}) c}\right)^{c} } ,
	\end{align}
	\normalsize
	where $c=3.575$ is a constant related to the approximate PDF of the PPP Voronoi cell area in $\mathbb{R}^2$.
\end{theorem}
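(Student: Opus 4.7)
The plan is to evaluate the two Laplace transforms appearing in \eqref{SINR_RA_1} separately, relying on Approximation~\ref{app1} so that $\mathcal{I}_{\text{In}}$ and $\mathcal{I}_{\text{Out}}$ can be treated as independent, and then plug $s=\theta_{\text{RA}}/\rho_{\text{L}}$ into each. The noise factor $\exp\{-\sigma^{2}\theta_{\text{RA}}/\rho_{\text{L}}\}$ is already in place from conditioning on $\mathtt{h}_\circ\sim\mathrm{Exp}(1)$.

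For the intra-cell term, first I would observe that, by the independent thinning of the RBC rule and the uniform random choice among $n_Z$ ZC codes, the CHs contending on the tagged ZC code form a PPP of intensity $\delta\mu/n_Z$, so the mean number of co-code CHs per Voronoi cell is $\tilde{\alpha}=\delta\alpha/n_Z$. Using the same approximate gamma-type pmf that produced \eqref{pdf_users}, the number $\mathcal{N}_I$ of intra-cell co-code interferers seen by the tagged BS has a negative-binomial-like law with mean $\tilde\alpha$ and shape $c$. Because every co-code interferer inverts its path loss to its own (which is here the same) BS, each contributes $\rho_{\text{L}}\mathtt{h}_m$ with $\mathtt{h}_m\sim\mathrm{Exp}(1)$ i.i.d., hence
\begin{equation*}
\mathscr{L}_{\mathcal{I}_{\text{In}}}(s)=\mathbb{E}\!\left[\Big(\tfrac{1}{1+s\rho_{\text{L}}}\Big)^{\mathcal{N}_I}\right]
\end{equation*}
which is the PGF of $\mathcal{N}_I$ at $1/(1+s\rho_{\text{L}})$. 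Evaluating that PGF and substituting $s=\theta_{\text{RA}}/\rho_{\text{L}}$ produces the denominator $\bigl(1+\tfrac{\delta\alpha\theta_{\text{RA}}}{n_Z(1+\theta_{\text{RA}})c}\bigr)^{c}$ in \eqref{eq:RA_success}.

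For the inter-cell term, I would represent the interfering CHs as a PPP of intensity $\delta\mu/n_Z$ outside the serving BS's cell, with each interferer's transmit power being $\rho_{\text{L}}\mathcal R_{\text{L},m}^{\eta}$, where $\mathcal R_{\text{L},m}$ is the distance from interferer $m$ to its own nearest BS. Conditional on the distance $\mathtt R_m$ from the tagged BS, the contribution is $\rho_{\text{L}}\mathcal R_{\text{L},m}^{\eta}\mathtt h_m\mathtt R_m^{-\eta}$. Applying the PGFL of the PPP and then taking the expectation over $\mathcal R_{\text{L},m}$ (which is Rayleigh-distributed by the standard nearest-BS argument in uplink stochastic geometry, e.g.\ \cite{uplink2_jeff}), one obtains an integral in polar coordinates. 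After substituting $u=\theta_{\text{RA}}\mathcal R_{\text{L},m}^{\eta}\mathtt R_m^{-\eta}$ and using the Gauss hypergeometric identity for $\int_0^\infty\!u/(1+u^{\eta/2})\,du$-type integrals, this closes in form to $\exp\bigl\{-\tfrac{2\delta\alpha\theta_{\text{RA}}}{n_Z(\eta-2)}{}_2F_1(1,1-\tfrac2\eta,2-\tfrac2\eta,-\theta_{\text{RA}})\bigr\}$, exactly the exponential factor in the numerator of \eqref{eq:RA_success}.

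The main obstacle is the intra-cell step: the exact joint law of the number of CHs and their locations inside a typical Voronoi cell is intractable, which is precisely why Approximation~\ref{app1} and the gamma-shape fit $c=3.575$ from \cite{6576413} are invoked. Once one accepts that approximation, the PGF evaluation is short; the inter-cell step, by contrast, is a routine (though technically involved) PPP-PGFL computation under full inversion power control. Combining the noise factor with the two Laplace transforms and collecting terms yields \eqref{eq:RA_success}.
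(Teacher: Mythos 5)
Your proposal is correct and follows essentially the same route as the paper: the paper's own proof is a one-line citation to \cite{Gharbieh_tcom}, but the detailed derivation it relies on (and spells out for the CGBC analogue in Appendix~A) is exactly your decomposition --- the negative-binomial PGF with shape $c$ evaluated at $1/(1+\theta_{\text{RA}})$ for the intra-cell term, and the PPP PGFL with full path-loss inversion and the $\mathbb{E}[\mathcal{P}_{\text{RA}}^{2/\eta}]$ moment for the inter-cell term (your case is the $\tau=0$ specialization of \eqref{eq:app1}--\eqref{eq:AppA_2}). The only cosmetic slip is attributing the gamma-shape fit $c=3.575$ to Approximation~\ref{app1}; that fit comes from \cite{6576413}, while Approximation~\ref{app1} covers the independence of the transmit powers, but this does not affect the argument.
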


\begin{proof}
	Similar to \cite[Lemma 1]{Gharbieh_tcom}, where \ref{eq:RA_success} is not exact  due to ignoring the spatial correlations among the transmission powers of the {\rm CH}s as mentioned earlier in Approximation~\ref{app1}.
\end{proof}
For the special case of $\eta=4$, which is a typical path loss exponent for urban outdoor environment, \eqref{eq:RA_success} reduces to:
\begin{align} \label{eq:Out10}
{P_{\text{{\rm RA}}}} \;  {\approx} \; \frac{\exp\left\{- \frac{\sigma^2 \theta_{\text{{\rm RA}}}}{\rho_{\text{L}}} -\frac{\delta \alpha}{n_Z} \sqrt{\theta_{\text{{\rm RA}}}}\arctan\left({\sqrt{\theta_{\text{{\rm RA}}}}}\right)   \right\}}{ \left(  1+  \frac{\delta \alpha \theta_{\text{{\rm RA}}}  }{n_Z (1+\theta_{\text{{\rm RA}}})c}\right)^{c} }.
\end{align}
The expression {in \eqref{eq:Out10}} gives the {\rm RA} success probability in terms of the elementary arctan function instead of the computationally complex  Gaussian hypergeometric function.

\vspace{2mm}
\subsubsection{CGBC Scheme} The {\rm {\rm RA}} success probability for the CGBC scheme can be expressed as: 
\small
\begin{align}\label{SINR_RA_2} 
P_{\text{{\rm RA}}} =&  \mathbb{P} \left\{ \frac{\rho_{\text{L}}\; \mathtt{h}_{\circ}}{\sigma^2 + \mathcal{I}_{\text{In} \mid  \mathtt{h} > \tau}+\mathcal{I}_{\text{Out} \mid  \mathtt{h} > \tau}}>\theta_{\text{{\rm RA}}} \mid \mathtt{h} > \tau \right\}  \notag \\
\overset{(b)}{=} & \begin{cases} 
\exp\left\{- \frac{\sigma^2 \theta_{\text{{\rm RA}}}}{\rho_{\text{L}}} +\tau \right\} \mathscr{L}_{\mathcal{I}_{\text{In}} \mid  \mathtt{h} > \tau} \left(\frac{\theta_{\text{{\rm RA}}}}{\rho_{\text{L}}} \right)\\ \quad  \quad \quad \quad  \quad \quad \times \mathscr{L}_{\mathcal{I}_{\text{Out}} \mid  \mathtt{h} > \tau} \left(\frac{\theta_{\text{{\rm RA}}}}{\rho_{\text{L}}} \right),  & \frac{\theta_{\text{{\rm RA}}}(\sigma^2+\mathcal{I})}{\rho_{\text{L}}}>\tau \\
1, & \text{otherwise{,}}
\end{cases}
\end{align}
\normalsize
where ${\mathcal{I}_{\text{In} \mid  \mathtt{h} > \tau}}$ is the intra-cell interference given that the interfering {\rm CH}s have channel gain greater than $\tau$,  ${\mathcal{I}_{\text{Out} \mid  \mathtt{h} > \tau}}$ is the inter-cell interference given that the interfering {\rm CH}s have channel gain greater than $\tau$, and $\mathcal{I}$ is the total interference. Note that $(b)$ in \eqref{SINR_RA_2} follows from the exponential distribution of $\mathtt{h}_{\circ}$ \cite{survey_h}. The {\rm {\rm RA}} access success probability in \eqref{SINR_RA_2} is characterized by the following theorem.

\begin{theorem}\label{lemma_RA_success2}
	The {\rm {\rm RA}} access success probability  in a PPP network and Channel-Gain-Based Clustering where each {\rm CH} employs full path-loss inversion power control is given by
	\small
	\begin{align}\label{eq:RA_success2}
	{P_{\text{{\rm RA}}}} \;{\approx}\;  & \frac {\exp\left\{\!\!\!- \frac{\sigma^2 \theta_{\text{{\rm RA}}}}{\rho_{\text{L}}}\! +\! \tau \! - \! 2 \; \delta \; \alpha \; \theta^{2/\eta}_{\text{{\rm RA}}} \! \! \! \int\limits_{\theta_{\text{{\rm RA}}}^{\frac{-1}{\eta}}}^{\infty}\! \! \!\left(\!1\! - \! \frac{\exp\{\!-\tau y^{-\eta }\!\}}{y^{-\eta} +1}\!\right)y \; dy \right\}}{ \left(  1+  \frac{\delta \; \alpha ((1+\theta_{\text{{\rm RA}}})-\exp\{-\tau \theta_{\text{{\rm RA}}}\})} {(1+\theta_{\text{{\rm RA}}}) c}\right)^{c} } \notag \\
	\times & \bar{F}_{\mathcal{I}}\left(\frac{\tau \; \rho_{\text{L}}}{\theta_{\text{{\rm RA}}}}-\sigma^2\right)+F_{\mathcal{I}}\left(\frac{\tau \; \rho_{\text{L}}}{\theta_{\text{{\rm RA}}}}-\sigma^2\right),
	\end{align}
	\normalsize
	where $c=3.575$ is a constant related to the approximate PDF of the PPP Voronoi cell area in $\mathbb{R}^2$ and $F_{\mathcal{I}}\left(x \right)$ is the CDF of the aggregated interference which has the form of \eqref{eq:CDF_I} in the bottom of this page.
\end{theorem}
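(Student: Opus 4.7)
My plan is to build on the case split already written in \eqref{SINR_RA_2} and then separately compute the two Laplace transforms appearing in the first branch, following the same stochastic-geometry machinery that underlies Theorem~\ref{lemma_RA_success}, but carrying the $\mathtt{h}>\tau$ conditioning through every step. First I would use the law of total probability with respect to $\mathcal{I}=\mathcal{I}_{\text{In}}+\mathcal{I}_{\text{Out}}$: when $\mathcal{I}\le\tau\rho_{\text{L}}/\theta_{\text{{\rm RA}}}-\sigma^2$ the target $\mathtt{h}_\circ=\theta_{\text{{\rm RA}}}(\sigma^2+\mathcal{I})/\rho_{\text{L}}$ is below the conditioning level $\tau$, so $\mathtt{h}_\circ>\tau$ already forces success, giving the additive $F_{\mathcal{I}}(\tau\rho_{\text{L}}/\theta_{\text{{\rm RA}}}-\sigma^2)$ term. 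In the complementary regime, the memoryless property of the Exp$(1)$ variable $\mathtt{h}_\circ$ gives $\mathbb{P}\{\rho_{\text{L}}\mathtt{h}_\circ>\theta_{\text{{\rm RA}}}(\sigma^2+\mathcal{I})\mid\mathtt{h}_\circ>\tau\}=\exp\{\tau-\theta_{\text{{\rm RA}}}(\sigma^2+\mathcal{I})/\rho_{\text{L}}\}$, which after (approximately) interchanging the conditioning on $\mathcal{I}$ with the expectation produces the product $e^{-\sigma^2\theta_{\text{{\rm RA}}}/\rho_{\text{L}}+\tau}\,\mathscr{L}_{\mathcal{I}_{\text{In}}}(\theta_{\text{{\rm RA}}}/\rho_{\text{L}})\,\mathscr{L}_{\mathcal{I}_{\text{Out}}}(\theta_{\text{{\rm RA}}}/\rho_{\text{L}})$ weighted by $\bar{F}_{\mathcal{I}}(\tau\rho_{\text{L}}/\theta_{\text{{\rm RA}}}-\sigma^2)$. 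Invoking Approximation~\ref{app1} decouples the intra- and inter-cell contributions just as in the RBC proof.

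Next I would compute $\mathscr{L}_{\mathcal{I}_{\text{In}}}$ by observing that each co-cell CH using the same ZC code contributes $\rho_{\text{L}}\mathtt{h}_m$ (because full path-loss inversion cancels the serving distance on intra-cell links) and that $\mathtt{h}_m$ is the very fading realization that qualified the device as a CH, hence its distribution is Exp$(1)$ conditioned on exceeding $\tau$. A direct integration yields $\mathbb{E}[e^{-(\theta_{\text{{\rm RA}}}/\rho_{\text{L}})\rho_{\text{L}}\mathtt{h}_m}\mid\mathtt{h}_m>\tau]=e^{-\tau\theta_{\text{{\rm RA}}}}/(1+\theta_{\text{{\rm RA}}})$, which I would then feed into the negative-binomial PGF of the intra-cell CH count implied by \eqref{pdf_users1} with $\tilde{\delta}$ replaced by $\delta\alpha/n_Z$. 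This is precisely what gives the denominator $(1+\delta\alpha[(1+\theta_{\text{{\rm RA}}})-e^{-\tau\theta_{\text{{\rm RA}}}}]/[(1+\theta_{\text{{\rm RA}}})c])^{c}$ in \eqref{eq:RA_success2}.

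For $\mathscr{L}_{\mathcal{I}_{\text{Out}}}$ I would apply the PGFL of the inter-cell CH PPP of intensity $\delta\mu/n_Z$ (Approximation~\ref{app2}), pushing the conditioning on the interferer's own selection fading into the characteristic functional. After substituting the power-control transmit power and conditioning the fading that drives the received interference on being above $\tau$, the Laplace transform of each contribution becomes $e^{-\tau s\rho_{\text{L}}R^\eta r^{-\eta}}/(1+s\rho_{\text{L}}R^\eta r^{-\eta})$; averaging and performing the change of variable $y=r\theta_{\text{{\rm RA}}}^{-1/\eta}$ (with the guard-zone lower limit $\rho_{\text{L}}^{-1/\eta}$ mapping to $\theta_{\text{{\rm RA}}}^{-1/\eta}$ after the natural rescaling) produces exactly $-2\delta\alpha\theta_{\text{{\rm RA}}}^{2/\eta}\int_{\theta_{\text{{\rm RA}}}^{-1/\eta}}^{\infty}(1-e^{-\tau y^{-\eta}}/(y^{-\eta}+1))y\,dy$ in the exponent. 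Finally the CDF $F_{\mathcal{I}}$ itself has to be made explicit; I would recover it from the inversion formula $F_{\mathcal{I}}(x)=\tfrac{1}{2}-\tfrac{1}{\pi}\int_0^\infty t^{-1}\mathrm{Im}\{e^{-jtx}\phi_{\mathcal{I}}(t)\}dt$ applied to the characteristic function $\phi_{\mathcal{I}}(t)=\mathscr{L}_{\mathcal{I}}(-jt)$ obtained above (with $s\rho_{\text{L}}$ replaced by $-jt\rho_{\text{L}}$ throughout), yielding \eqref{eq:CDF_I}.

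The main obstacle I expect is the inter-cell step: keeping the bookkeeping clean when the CGBC conditioning is pushed inside the PGFL while at the same time correctly identifying the guard-zone lower limit induced by the full path-loss inversion (so that the $\tau=0$ specialization must collapse to the ${}_2F_1$ expression of Theorem~\ref{lemma_RA_success}) is delicate; one needs Approximations~\ref{app1} and \ref{app2} to decouple the serving-distance distribution of the interferers from their spatial positions. A secondary but important subtlety is the interchange of the conditioning on $\mathcal{I}>\tau\rho_{\text{L}}/\theta_{\text{{\rm RA}}}-\sigma^2$ with the expectation defining the Laplace transform, which is not exact; this interchange is what makes the final expression an approximation ($\approx$) rather than an equality, consistent with the approximation hypotheses already declared.
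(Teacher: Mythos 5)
Your plan reproduces the paper's own argument in Appendix A step for step: the same case split on whether $\mathcal{I}$ exceeds $\tau\rho_{\text{L}}/\theta_{\text{{\rm RA}}}-\sigma^2$ combined with the total probability theorem (yielding the $F_{\mathcal{I}}$ and $\bar{F}_{\mathcal{I}}$ weights via Gil--Pelaez inversion of $\mathscr{L}_{\mathcal{I}}(-jt)$), the same memoryless-property factor $e^{\tau}$, the same per-interferer conditioned-exponential Laplace transform $e^{-\tau s\rho_{\text{L}}}/(1+s\rho_{\text{L}})$ fed into the negative-binomial PGF for the intra-cell term, and the same PGFL-with-guard-zone computation for the inter-cell term. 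The approach is correct and essentially identical to the paper's proof.
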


\begin{proof}
	See Appendix \ref{proof1},  where \ref{lemma_RA_success2} is not exact  due to ignoring the spatial correlations among the transmission powers of the {\rm CH}s as mentioned earlier in Approximation~\ref{app1}.
\end{proof}

\vspace{-2mm}
\subsection{D2D Transmission Success Probability} \label{Sec:D2D}
\vspace{-1mm}
On the D2D links, each {\rm CH} aggregates the uplink scheduling requests originated from its associated {\rm CM}s TDMA scheduling. Hence, inter-cluster interference and noise are the only two channel impairment for D2D transmission. Therefore, the transmission SINR for the D2D relaying can be computed as follows:
\setcounter{equation}{9}
\begin{align}
\Upsilon_{\text{C}}&=\frac{\rho_{\text{C}} \;  \mathtt{h}_\ast }{{\sigma^2+\underset{i  \in {{\Phi}}}{\sum} \mathtt{P}_{\text{D}i} \ \mathtt{h}_i \ \mathtt{R}_i\!^{-\eta}}},
\end{align}
where  $\mathtt{h}_\ast$ represents the D2D channel gain between the test {\rm CM} and its associated {\rm CH}. The set $\Phi$ contains all the interfering {\rm CMs} that transmit simultaneously over the same frequency, which are one {\rm CM} per cluster due to the TDMA scheduling.  $\mathtt{P}_{\text{D}i}, \ \mathtt{h}_i,$ and $\mathtt{R}_i $ represent the transmit power, the channel gain, and the distance between the interfering {\rm CMs} and the associated {\rm CH} of the test {\rm CM}. To characterize $\Upsilon_{\text{C}}$, we follow a similar methodology described for $\Upsilon_{\text{RA}}$, while accounting for the fact that each {\rm CH} has a single active {\rm CM} to serve at a given time instant. { Hence, the  intensity of interfering {\rm CM}s on each channel is equal to the intensity of the {\rm CH}s that selected the same frequency, i.e., intensity of interfering {\rm CM}s is $\frac{\delta \; \mu}{k}$, where $k$ is the number of frequencies available for D2D transmission.} Consequently, the D2D success probability $P_{\text{C}}=\mathbb{P}\left\{ \Upsilon_{\text{C}}\! > \theta_{\text{C}}  \right\}$ is characterized by the following theorem.

\begin{theorem}\label{lemma_Tx_success}
	The probability of successful uplink scheduling request over a PPP D2D link where each {\rm CM} employs full path-loss inversion power control, can be expressed as
	\small
	\begin{align}\label{eq:Out2}
	{P_{\text{C}}} &\; {\approx} \; \exp\left\{- \frac{\sigma^2 \theta_{\text{C}}}{\rho_{\text{C}}} -  \frac{2  \theta_{\text{C}}}{k (\eta-2) } \;{}_2F_1\left(1,1-\frac{2}{\eta},2-\frac{2}{\eta},-\theta_{\text{C}}\right) \right\}. 
	\end{align}
	\normalsize
\end{theorem}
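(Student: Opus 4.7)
The plan is to follow the same Rayleigh-fading/PGFL recipe that underlies Theorem~\ref{lemma_RA_success}, with the simplification that TDMA scheduling inside each cluster eliminates the intra-cluster interference, so only the inter-cluster term survives. Starting from $P_{\text{C}}=\mathbb{P}\{\Upsilon_{\text{C}}>\theta_{\text{C}}\}$, substituting the SINR definition and exploiting the unit-mean exponential law of $\mathtt{h}_\ast$ factors the probability into a noise term and the Laplace transform of the aggregate interference $\mathcal{I}$:
\begin{equation*}
P_{\text{C}}=\exp\!\left\{-\tfrac{\sigma^{2}\theta_{\text{C}}}{\rho_{\text{C}}}\right\}\mathscr{L}_{\mathcal{I}}\!\left(\tfrac{\theta_{\text{C}}}{\rho_{\text{C}}}\right),
\end{equation*}
so the only real work is the evaluation of $\mathscr{L}_{\mathcal{I}}$.

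For that I would invoke Approximation~\ref{app2} to model the set of CMs transmitting on the tagged D2D channel as a homogeneous PPP. Since each CH independently and uniformly picks one of $k$ frequencies and TDMA forces exactly one active CM per cluster, independent thinning of the CH process of intensity $\delta\mu$ yields an interferer intensity of $\delta\mu/k$. Applying the PGFL, plugging in the full path-loss inversion power $\rho_{\text{C}}\mathcal{R}_{\text{C}}^{\eta}$ of a generic interferer, and averaging out the Rayleigh fading with $s=\theta_{\text{C}}/\rho_{\text{C}}$, the PGFL integrand reduces to $\theta_{\text{C}}(\mathcal{R}_{\text{C}}/r)^{\eta}/[1+\theta_{\text{C}}(\mathcal{R}_{\text{C}}/r)^{\eta}]$ weighted by $2\pi(\delta\mu/k)\,r\,dr$. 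After the uplink change of variables $u=r/\mathcal{R}_{\text{C}}$, imposing the nearest-CH constraint $r\geq\mathcal{R}_{\text{C}}$ (so $u\geq 1$), and averaging against the Rayleigh density of $\mathcal{R}_{\text{C}}$ with parameter $\pi\delta\mu$, the intensity $\delta\mu$ cancels against $\mathbb{E}[\mathcal{R}_{\text{C}}^{2}]=1/(\pi\delta\mu)$, and one is left with
\begin{equation*}
\log\mathscr{L}_{\mathcal{I}}(\theta_{\text{C}}/\rho_{\text{C}})=-\frac{2}{k}\int_{1}^{\infty}\frac{\theta_{\text{C}}\,u^{1-\eta}}{1+\theta_{\text{C}}u^{-\eta}}\,du.
\end{equation*}
The Euler integral representation of the Gauss hypergeometric function then identifies the right-hand side with $-\frac{2\theta_{\text{C}}}{k(\eta-2)}\,{}_2F_{1}(1,1-2/\eta,2-2/\eta,-\theta_{\text{C}})$, giving~\eqref{eq:Out2}.

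The main obstacle is the decoupling step above: the serving distance $\mathcal{R}_{\text{C}}$ of an interferer to its own CH is genuinely correlated with its distance $r$ to the test CH through the nearest-CH association rule, and with the Voronoi cell geometry of the surrounding CH process. I would resolve this exactly as Theorem~\ref{lemma_RA_success} does, by invoking Approximation~\ref{app1} to ignore the spatial correlation of the transmit powers while retaining only the hard inequality $r\geq\mathcal{R}_{\text{C}}$; once that approximation is granted, the remaining step is the textbook radial integral displayed above, and the $\delta\mu$-cancellation that produces the intensity-free prefactor $2/k$ comes out automatically.
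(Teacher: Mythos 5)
Your proposal is correct and follows essentially the same route as the paper, whose proof simply defers to \cite[Theorem 1]{elsawy2014stochastic}: the factorization into a noise term times $\mathscr{L}_{\mathcal{I}}(\theta_{\text{C}}/\rho_{\text{C}})$ via the exponential fading of $\mathtt{h}_\ast$, the PPP interferer intensity $\delta\mu/k$ from TDMA plus frequency thinning, the PGFL with the exclusion condition $r\geq\mathcal{R}_{\text{C}}$, and the cancellation of $\delta\mu$ against $\mathbb{E}[\mathcal{R}_{\text{C}}^{2}]=1/(\pi\delta\mu)$ are exactly the steps of that cited derivation, and you correctly flag Approximations~\ref{app1} and~\ref{app2} as the sources of the approximation. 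No gaps.
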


\begin{proof}
	{Similar to \cite[Theorem 1]{elsawy2014stochastic}, where \eqref{eq:Out2} is not exact  due to ignoring the spatial correlations among the transmission powers of the {\rm CM}s as mentioned earlier in Approximation~\ref{app1} and  due to approximating the inter-cluster interfering {\rm CM}s with a PPP as mentioned in Approximation~\ref{app2}.}
\end{proof}

For the special case of $\eta=4$, \eqref{eq:Out2} reduces to:
\begin{align} \label{eq:Out100}
{P_{C}} \; {\approx} \; \exp\left\{- \frac{\sigma^2 \theta_{\text{C}}}{\rho_{\text{C}}} -\frac{\sqrt{\theta_{\text{C}}}}{k} \arctan\left({\sqrt{\theta_{\text{C}}}}\right)   \right\}.
\end{align}
It is worthwhile to mention that the approximations in  \eqref{eq:RA_success}, \eqref{eq:RA_success2}, and \eqref{eq:Out2} are mandatory for tractability and are common in stochastic geometry analysis for uplink systems~\cite{uplink2_jeff, uplink_alamouri, marco_uplink}.

\subsection{ Channel Access Delay ($D$) } \label{Sec:delay}
The channel access delay ($D$)  is defined as the average number of time slots required by a device before the uplink transmission is successfully scheduled. In the depicted communication system, where the {\rm CMs} sends their requests to the {\rm CH} via a TDMA schedule, $D$ is given by
\begin{align} \label{ave_retransmission}
D & {\approx} \frac{1}{P_{\text{{\rm RA}}}} + \frac{\mathbb{E}\left[\mathcal{N}\right]}{P_{\text{C}}},
\end{align}
\noindent  which is derived by modeling the trials for both of the {\rm RA} and the D2D transmissions by geometric random variables. In addition, the mean number of {\rm CMs} associated to a {\rm CH} ($\mathbb{E}\left[\mathcal{N}\right]$) is introduced here to take into account the TDMA scheduling. It is worth to highlight that \eqref{ave_retransmission} is not exact due to ignoring the negligible temporal correlation of the transmission success probabilities as mentioned in Approximation~\ref{app3}.

\vspace{-2mm}
\section{Numerical Results} \label{Results}
\vspace{-1mm}
\begin{figure}[t!]	
	\begin{center}
		\subfigure[$\mu=160 $, $\delta= .1$ ]{\label{fig:out1}\includegraphics[width=2.78in]{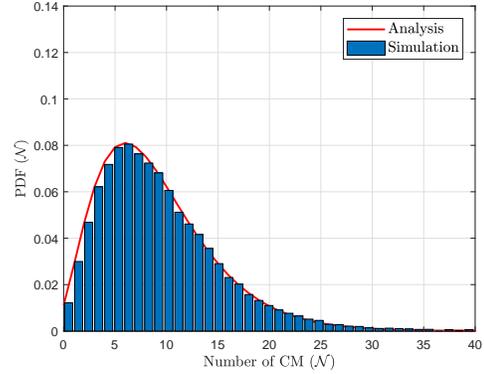}}\vspace{-2mm}
		\subfigure[$\mu=640 $, $\delta= .1$]{\label{fig:out2}\includegraphics[width=2.78in]{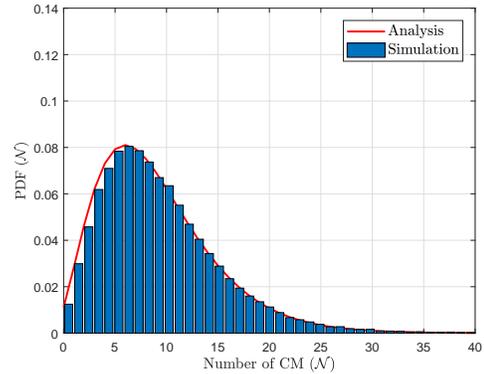}}\vspace{-2mm}
		\subfigure[$\mu=160 $, $\delta= .15$ ]{\label{fig:out2}\includegraphics[width=2.78in]{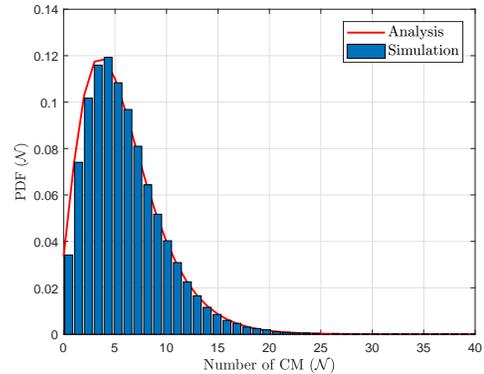}}\vspace{-2mm}
		\subfigure[$\mu=640 $, $\delta= .15$]{\label{fig:out2}\includegraphics[width=2.78in]{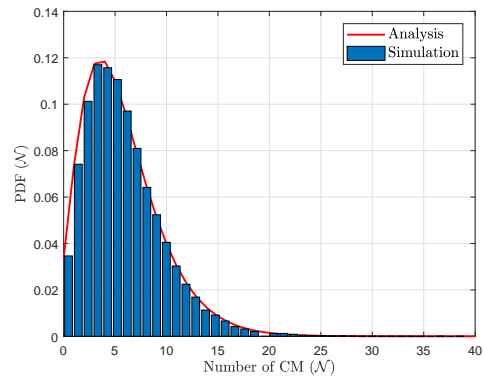}}
	\end{center}
	\vspace{-5mm}
	\caption{:  The PDF of the number of {\rm CM}s associated to a {\rm CH} for for $\mu=160 \text{ and } 640 $ UE/km$^2$ for different values of CH probability $\delta= .1$ and $.15$. }
	\vspace{-5mm}
	\label{fig:PDF_USER}
\end{figure}

This section first validates the developed model via independent Monte Carlo simulations. Then, selected numerical results are presented to assess and compare the performance of the RBC and CGBC schemes. In each simulation run, the BSs and the devices are realized over a 100 km$^2$ area via independent PPPs and the collected statistics are taken for devices located within 1 km from the origin to avoid the edge effects. First, we examine the D2D cluster size in \eqref{pdf_users1}. Fig.~\ref{fig:PDF_USER} shows the PDF of the associated {\rm CM}s for each {\rm CH} for $\mu=160 \text{ and } 640 $ UE/km$^2$ for different values of CH selection probability $\delta=0 .1$ and $0.15$. Fig.~\ref{fig:PDF_USER} supports the remark given in Section \ref{sec:PDF} about the independence of the PDF of $\mathcal{N}$ from the devices intensity. This can be explained as follows: The cluster's geographical footprint is represented by a Voronoi cell whose average area is $\bar{d} =\frac{1}{\delta \mu}$. The average geographical footprint $\bar{d}$ of the cluster shrinks as more devices are elected as {\rm CH}s. However, the intensity of CMs also increases such that the cluster size in terms of number of CMs stays constant. Fig.~\ref{fig:PDF_USER} also shows that as $\delta$ increases, the D2D cluster size becomes smaller, and hence, the PDF of $\mathcal{N}$ is pushed to have a smaller mean.

\begin{figure}[t!]
	
	\begin{center}
		\subfigure[$\alpha=16$ device/{\rm BS}.]{\label{fig:out1}\includegraphics[width=3.2in]{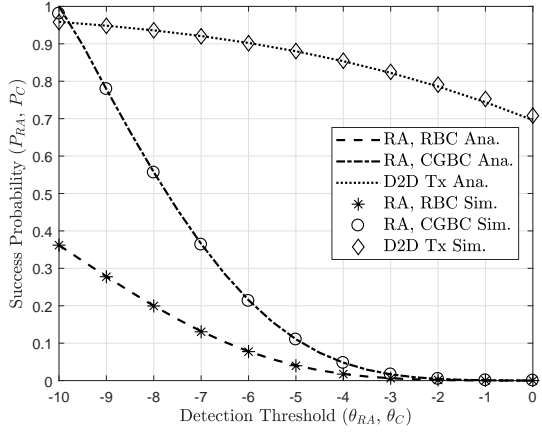}}
		\subfigure[$\alpha=64$ device/{\rm BS}.]{\label{fig:out2}\includegraphics[width=3.2in]{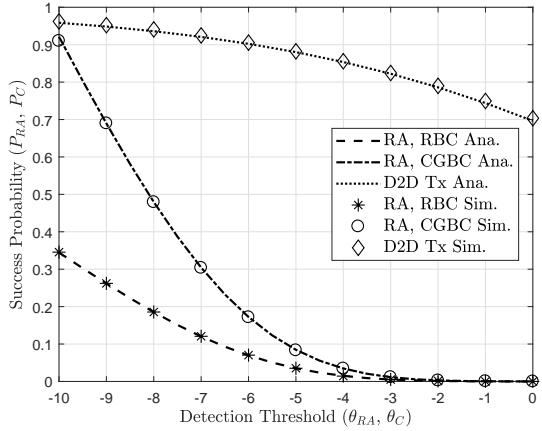}}
	\end{center}
	\vspace{-5mm}
	\caption{: The {\rm RA} and D2D transmission success probability as a function of $\theta_{\text{{\rm RA}}}$ and $\theta_{\text{C}}$ for {\rm CH} selection probability $\delta=0.35$.}
	\label{fig:success_thet}
\end{figure}

{Fig.~\ref{fig:success_thet} depicts the {\rm RA} transmission success probabilities for RBC and CGBC schemes along with the D2D transmission success probability. The simulation parameters are as follows; $\mu= 160$ and $640$ device/km$^2$, $\lambda=10$ BS/km$^2$, and equivalently, $\alpha= 16$ and $64$ device/BS. {\rm CH} selection probability $\delta=0.35$, $n_z$= 64 code per BS, $\rho_{\text{C}}=-80$ dBm,  $\rho_{\text{RA}}=-100$ dBm, noise power $\sigma^2=-90$ dBm,    number of frequencies available for D2D transmission  $k=3$, and path-loss exponent $\eta=4$.  It is important to note the close match between the analysis and simulation results which validates the developed mathematical framework and Approximations \ref{app1}-\ref{app2}.}

\begin{table}[t!]
	\vspace{5mm}
	\centering
	\footnotesize
	\renewcommand{\arraystretch}{1.3}
	\begin{tabular}{||p{.87cm}|p{3.5cm}|p{3.067cm}|}  
		\hline  \textbf{Notation} & \textbf{Description} & \textbf{Value}\\
		\hline $\mu$ & device densities &$160$ and $640$ devices/km$^2$ \\
		\hline $\lambda$  & {\rm BS} density & $10$  {\rm BS}/km$^2$ \\
		\hline ${n_{Z}}$ & number of ZC codes dectitated for random access & $64$ code per {\rm BS}\\
		\hline $\delta$ & {\rm CH} probability  & $[.1,1]$ \\
		\hline $\theta_{\text{C}} $ &  Detection threshold for \newline successful D2D transmission  & $-7$ dB\\
		\hline $\theta_{\text{{\rm RA}}} $ & Detection threshold for \newline successful {\rm {\rm RA}}  & $-7$ dB\\
		\hline $\rho_{\text{c}}$  & Power control parameter for D2D  & $ -80$ dBm  \\
		\hline $\rho_{\text{L}}$  & Power control parameter for {\rm {\rm RA}} &  $-100$ dBm \\
		\hline $\sigma^2$ &  noise power & $-90$ dBm\\
		\hline $k$ &  number of frequencies \newline available for D2D transmission & $3$ \\
		\hline $\eta$ & path-loss exponent & $4$ \\		
		\hline
	\end{tabular}
	\caption{: Numerical Evaluation Parameters}\label{table_parameters}
\end{table}

At this point of the discussion, we look into the channel access delay as a key performance metric. The conventional {\rm RA} procedure is used as a benchmark to evaluate the performance of the proposed D2D clustering procedure. It is worth mentioning that the conventional {\rm RA} procedure is a special case of the RBC D2D clustering by setting $\delta=1$, also it is a special case of the CGBC D2D clustering by setting $\tau=0$. The parameters used in this section are summarized in Table~\ref{table_parameters}.

\begin{figure}[t!]
	
	\begin{center}
		\subfigure[$\alpha=16$ device/{\rm BS}.]{\label{fig:out1}\includegraphics[width=3.2in]{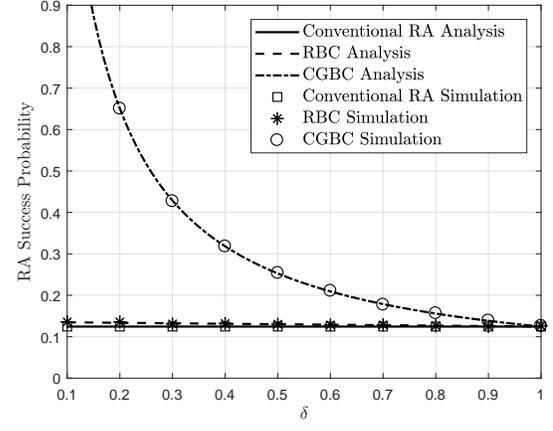}}
		\subfigure[$\alpha=64$ device/{\rm BS}.]{\label{fig:out2}\includegraphics[width=3.2in]{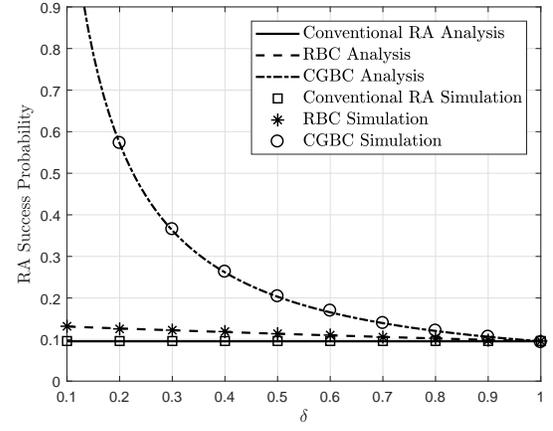}}
	\end{center}
	\vspace{-5mm}
	\caption{: The {\rm RA} success probability as a function of {\rm CH} selection probability $\delta$.}
	\label{fig:success}
\end{figure}

 Fig.~\ref{fig:success} and Fig.~\ref{fig:Delay} show respectively the {\rm RA} success probability and the delay as a function of $\delta$. As expected, Fig.~\ref{fig:success} shows that the {\rm RA} success probability for both the D2D clustering and the conventional cases decreases as $\alpha$ grows. And in turn, the delay increases as $\alpha$ increases. Also, Fig.~\ref{fig:success} shows that as the {\rm CH} selection probability $\delta$ increases the {\rm RA} success probability for D2D clustering decreases. This is due to the fact that as $\delta$ increases, more {\rm CH}s are eligible to perform an {\rm RA} procedure over the LTE interface. Therefore, both of the inter-cell and intra-cell interference increases leading to lower {\rm {\rm RA}} success probability ($P_\text{{\rm RA}}$). Furthermore, the results show that CGBC D2D clustering offers the highest {\rm {\rm RA}} success probability.

\begin{figure}[t!]
	
	\begin{center}
		\subfigure[$\alpha=16$ device/{\rm BS}.]{\label{fig:out1}\includegraphics[width=3.2in]{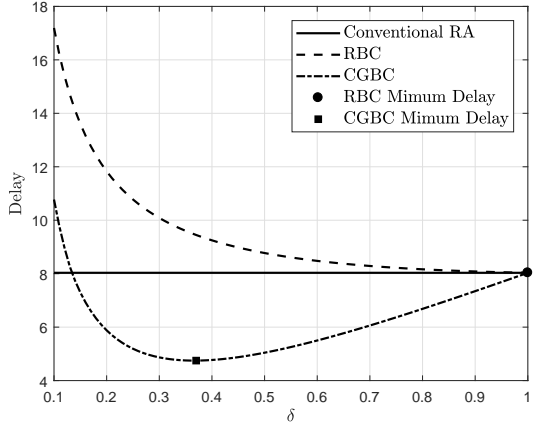}}
		\subfigure[$\alpha=64$ device/{\rm BS}.]{\label{fig:out2}\includegraphics[width=3.2in]{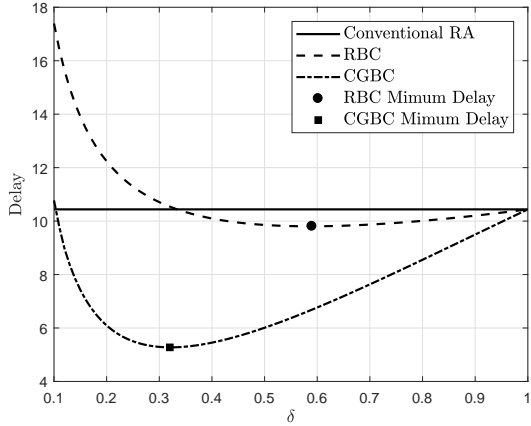}}
	\end{center}
	\vspace{-5mm}
	\caption{:  The average delay needed for successful {\rm RA} as a function of {\rm CH} selection  probability $\delta$. }
	\label{fig:Delay}
\end{figure}

For insightful conclusions,  Fig.~\ref{fig:success} and Fig.~\ref{fig:Delay} should be considered jointly. While one case may be favorable from the {\rm RA} success  probability perspective, it may be invoking too much delay and an adverse impact on the average waiting time for a successful {\rm RA}. Fig. \ref{fig:Delay} gives an interesting insight by comparing performance at two device densities. For RBC D2D clustering, the {\rm RA} performance will not gain any benefit at low device densities. Regardless how aggressive $\delta$ is, the conventional case always offers lower channel access delay. From a mere {\rm RA} perspective, it is simply just not worth it to use D2D clustering. However, the RBC D2D clustering starts to pay off at high intensities as shown in Fig.\ref{fig:Delay}. On the other hand, the CGBC offers an improvement over the conventional {\rm RA} even for low device intensities as shown Fig. \ref{fig:Delay}.

It is straightforward to notice the trade-off between the delay and the {\rm CH} selection probability $\delta$ or equivalently $\tau$. Specifically, $\delta$ has a two-fold effect on the delay as can be inferred from Eq. \eqref{ave_retransmission}. First, the larger the $\delta$ the more device are eligible to be {\rm CH}s, and hence, the smaller cluster size in terms of the number of the associated {\rm CM}s. Therefore, smaller number of {\rm CM}s results in a shorter delay in the  TDMA scheduled transmission. As such, the right-hand term of Eq. \eqref{ave_retransmission} decreases because the numerator ($\mathcal{N}$) decreases while the denominator  ($P_C$) is not intact with the change of $\delta$. Second, it is worthy recalling however that larger $\delta$ means a larger number of {\rm CH}s, which leads to degraded success probability of the {\rm {\rm RA}} over the LTE PRACH interface.  In more precise terms, the first term of Eq. \eqref{ave_retransmission} increases as $\delta$ increases. To sum up, the first term of Eq. \eqref{ave_retransmission} is a negative monotone in $\delta$ while the second is positive monotone. Consequently, $\delta$ can be optimized to achieve a minimum delay. For example,  employing RBC D2D clustering and at  $\alpha=64$ device/{\rm BS} with a value of ${\delta^*}=0.59$  minimizes the delay with a rate of $6 \%$ when compared to the conventional {\rm RA} case. On the other hand, for CGBC D2D clustering and at  $\alpha=16$ device/{\rm BS}  with a value of ${\tau^*}=-0.025$ dB and a corresponding value of  ${\delta^*}=0.37$ would minimize the delay at a reduction rate of $40 \%$ when compared to the conventional {\rm RA} case. While at $\alpha=64$ device/{\rm BS}, ${\tau^*}=0.567$ dB with a corresponding value of  ${\delta^*}=0.32$ would minimize the delay at a reduction rate of $49 \%$. 

\begin{figure}[t!]
	\begin{center}
		\includegraphics[width=3.5 in]{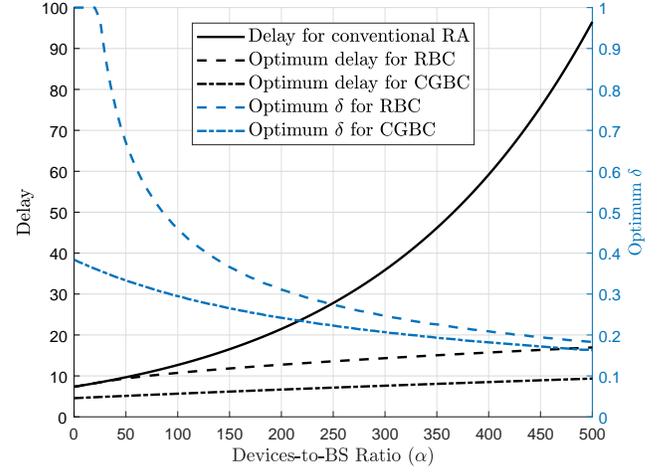}	
	\end{center}
	\vspace{-3mm}
	\caption{:  Optimum delay and the corresponding values of ${\delta^*}$  as a function of the devices-to-{\rm BS} ratio ($\alpha$).  }
	\vspace{-2mm}
	\label{fig:optimum}
\end{figure}

Fig.~\ref{fig:optimum} depicts the optimum delay and the corresponding value of the CM selection probability ($\delta$) for both RBC and CBGC clustering schemes. The result shows that the RBC starts to offer a reduction in the {\rm RA} access delay when devices-to-{\rm BS} ratio becomes larger than 50. However, the CBGC always offers an enhancement over the conventional {\rm RA}. Another insightful observation from Fig.~\ref{fig:optimum} is that as $\alpha$ increases, $\delta$ decreases. This behavior is mainly due to the fact that the {\rm RA}  congestion  over the LTE interface is more  critical when the devices intensity increases compared with the delay that stems from the TDMA transmission within the D2D cluster.

\begin{figure}[t!]
	\begin{center}
		\includegraphics[width=3.5 in]{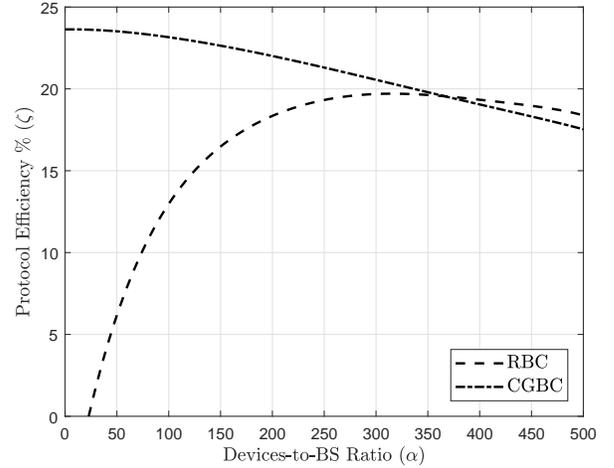}	
	\end{center}
	\vspace{-3mm}
	\caption{: Protocol efficiency for optimum delay reduction rate as a function of the devices-to-{\rm BS} ratio ($\alpha$).  }
	\label{fig:eff_optimum}
\end{figure}

It should be mentioned, however, that the improved performance of the CGBC scheme comes at the cost of higher protocol overhead. That can be justified by the lower optimal CH selection probability, which leads to higher cluster population for the CGBC scheme, and in turn, larger signaling in the cluster formation process. As such, there is a need to quantify the protocol efficiency which can be defined by the delay reduction rate over the protocol overhead. We consider the average cluster size $\mathbb{E}\left[\mathcal{N}\right]$  as a measure for the protocol overhead. Therefore the protocol efficiency ($\zeta$) can be calculated as follows: 
\begin{align} \label{eq:protocol_eff}
\zeta = \frac{D(\delta={\delta^*})-D(\delta=1)}{D(\delta=1)\mathbb{E} \;\left[\mathcal{N}\right]}  \times  100 \%,
\end{align}
where $D(\delta)$ is the delay in \eqref{ave_retransmission} as a function of the {\rm CH} selection probability $(\delta)$.

Fig.~\ref{fig:eff_optimum} shows the protocol efficiency for the optimal delay in Fig.~\ref{fig:optimum}. The figure shows that even though the CGBC provides a lower delay, the protocol efficiency is lower when the devices intensity scales. Specifically, it is more rewarding in terms of protocol efficiency to follow the RBC scheme when the devices intensity goes beyond $370$ devices/{\rm BS}.

	\vspace{-2mm}
\section{Implementation, Issues, $\&$ remedies} \label{sec:Imp}
	\vspace{-1mm}
The foreseen gain that Fig.~\ref{fig:optimum} depicts can be best achieved through an automated self-optimization algorithm. Such an algorithm can be implemented through a back-end script at the network core whose goal is to estimate network parameters and calculate the optimum value of {\rm CH}s selection probability ($\delta$) for RBC and CGBC schemes. The analytical results in Section~\ref{sic:Performance_Analysis} show that the performance of the proposed D2D clustering schemes depends on three estimated parameters, namely, the intensity of {\rm BS}s, the devices intensity, and the path-loss exponent. The intensity of {\rm BS}s may be the easiest parameter to estimate as the number of BSs in a geographic area is available to the operators, then the intensity of BSs can be estimated accordingly. The devices intensity, on the other hand, can be estimated through event-triggered reporting for the association table. As such, the {\rm CH} reports to the {\rm BS} any change occurred to its associated {\rm CM}s, then the {\rm BS}, in turn, reports this change to the core network. The core network can estimate the current device intensity and then broadcast the optimum value of $\delta$ in a reclustering order. {Moreover, for the CGBC scheme, the CHs are required to report to the BS if the estimated channel gain goes below the threshold $\tau$. In this case, the BS can broadcast  a reclustering order to maintain the performance edge.} Lastly, the path-loss exponent estimation can be done by a self-estimator that only requires collecting multiple Received Signal Strength (RSS) as in \cite{6981955}  which can be executed easily due to its independence. The pseudo code for the back-end self-optimization script is shown in {\bf{Algorithm \ref{back_algo}}}, {where the optimization problems in \eqref{optimization} and \eqref{optimization1} can be solved via a one-dimensional line search with an initial uncertainty range of $I_{\circ}\in (0,1]$. One of the algorithms that can be used to solve \eqref{optimization} and \eqref{optimization1}  is the golden-section search. The number of golden-section search iterations ($n$) that achieves an accuracy of ($\epsilon$) for the  can be estimated by \cite{Antoniou_optimization}
\begin{align}
n \geq \log_{K}\frac{I_{\circ}}{\epsilon},
\end{align} 
\vspace{-1mm}
where $K=(1+\sqrt{5})/2$ is the golden ratio constant. For example, the golden-section search achieves an accuracy of $\epsilon=10^{-6}$ with $n=29$ iterations and only $30$ function evaluations. The brute-force method, on the other hand, requires $10^{\;6}$ function evaluations to find a $\delta^*$ with the same accuracy.}

\begin{algorithm}[t]
	\caption{\!\!: Pseudo code for the back-end self-optimization script.}\label{euclid}
	\begin{algorithmic}[1]
		\State Estimate $\lambda, \mu, \eta$.
		\State Solve the optimization problem of $D$ in \eqref{ave_retransmission} for RBC scheme as:
		\vspace{-4mm}
		\begin{equation}
		\begin{aligned}
		& \underset{\delta}{\text{minimize}}
		& &  D_{\text{RBC}}=\frac{1}{P_{\text{{\rm RA}}}} + \frac{\tilde{\delta}}{P_{\text{C}}}. \\
		& \text{subject to}
		& & 0 < \delta \leq 1 
		\end{aligned}
		\label{optimization}
		\end{equation} where $P_{\text{{\rm RA}}}$ can be evaluated by \eqref{eq:RA_success}. 
		\State Solve the optimization problem of $D$ in \eqref{ave_retransmission} for CGBC scheme as:
			\vspace{-4mm}
		\begin{equation}
		\begin{aligned}
		& \underset{\delta}{\text{minimize}}
		& &  D_{\text{CGBC}}=\frac{1}{P_{\text{{\rm RA}}}} + \frac{\tilde{\delta}}{P_{\text{C}}}. \\
		& \text{subject to}
		& & 0 < \delta \leq 1 
		\end{aligned}
		\label{optimization1}
		\end{equation} where $P_{\text{{\rm RA}}}$ can be evaluated by \eqref{eq:RA_success2}, and $\tau=-\ln(\delta)$.
		\State Compare $D_{\text{RBC}}$ and $ D_{\text{CGBC}}$  and select the scheme that results in lower $D$.
		\State Return the optimum $\delta$ and the selected scheme.
	\end{algorithmic}
	\label{back_algo}
\end{algorithm}

However, incentivizing and commercializing D2D clustering have been dwelling in a slightly stagnant state for some time due to some practical concerns. First, one of the major implementation challenges of the proposed scheme is that the {\rm CM} may fall in a BS footprint different than the {\rm CH} is associated to. Since the uplink resources are better granted to the devices by the closest BS, the core network is in the best position to process the uplink resource scheduling such that each device is granted uplink resources from the nearest BS. Second, the use of D2D clustering network to reroute uplink scheduling requests entails fairness issues regarding battery depletion rates of the {\rm CH}s. Furthermore, the proposed D2D clustering entails low-layer modifications to the protocol stack something that needs to be taken into standardization meetings. 

\vspace{-2mm}
\section{Conclusions}\label{sec:Conclusions}
\vspace{-1mm}
This paper introduces a self-organized D2D clustering scheme to relieve the congestion on the {\rm RA} resources in massively loaded networks. Two D2D clustering schemes are studied, namely, RBC and CGBC. The results show that the RBC scheme offers no delay reduction at low device densities and hence it is preferable to follow the conventional access model. As the device intensity grows, the RBC starts to offer reduced delay and there is an optimal value of {\rm CH}s selection probability ($\delta$) that minimizes the delay. To maintain the performance edge of the RBC, the {\rm BS} has to revisit the clustering relationships whenever the intensity changes. On the other hand, the CGBC offers significant performance gains when compared to both the conventional and RBC schemes even for low device intensities. However, the gain edge comes at the cost of higher overhead due to the larger cluster size, and hence, larger signaling in the clustering process. As such the two schemes offer a trade-off between complexity and performance. To this end, a self-optimization algorithm to execute the D2D clustering is presented. We also highlight a few remedies and recommendations for practical implementation. 

\appendix
\subsection{Proof of Theorem \ref{lemma_RA_success2}. }\label{proof1}
Note that the nearest {\rm BS} association and the employed power control enforce the following two conditions; (i) the intra-cell interference from an interfering device is equal to $\rho_{\text{L}}$, and (ii) the inter-cell interference from any interfering device is strictly {less than} $\rho_{\text{L}}$. The aggregated inter-cell interference received at the {\rm BS} is obtained as:
\begin{align}\label{eq:AppA_3}
\mathcal{I}_{\text{out}\mid  \mathtt{h}>\tau}=  \sum\limits_{m \in \tilde{{\Phi}} } \mathbbm{1}_{{\{}\mathtt{P}_{\text{{\rm RA}}m} \mathtt{R}_m\!^{-\eta}<\rho_{\text{L}}\}}\mathtt{P}_{\text{{\rm RA}}m} \ \mathtt{h}_m \ \mathtt{R}_m\!^{-\eta}.
\end{align}
 Approximating the set of interfering devices by a PPP with independent transmit powers, the Laplace Transform of \eqref{eq:AppA_3} can be approximated as \eqref{eq:app1}.
\begin{align} \label{eq:app1}
\mathscr{L}_{\mathcal{I}_{\text{out}\mid \mathtt{h}>\tau}}(s) \approx \exp \Bigg\{ -2\pi \; \delta\; \alpha \lambda \; \mathbb{E}_{\mathcal{P}_{\text{{\rm RA}}}}\left[\mathcal{P}_{\text{{\rm RA}}}^{\frac{2}{\eta}} \;   \right] \notag \\  \times \int\limits_{( \rho_{\text{L}} )^{\frac{-1}{\eta}}}^{\infty}\left(1- \frac{\exp\{-\tau s \; y^{-\eta }\}}{s\;y^{-\eta} +1}\right)\;y\; dy   \Bigg\}.
\end{align}
\noindent  The LT is obtained by using the probability generating function (PGFL) of the PPP \cite{ martin_book} and following \cite{elsawy2014stochastic}, where the LT is obtained by substituting the value of $\mathbb{E}_{\mathcal{P}_{\text{{\rm RA}}}}\left[\mathcal{P}_{\text{{\rm RA}}}^{\frac{2}{\eta}}\right]$ from [Lemma 1,\cite{elsawy2014stochastic}].
The Intra-cell interference conditioned on the number of neighbors is given by:
\begin{align}\label{eq:AppA_1}
\small
\mathcal{I}_{\text{in}\mid \mathtt{h}>\tau , \mathtt{n}} = \displaystyle{\sum\limits_{i=1}^\mathtt{n} \rho_{\text{L}} \mathtt{h}_i}.
\normalsize
\end{align}
The Laplace Transform of \eqref{eq:AppA_1} is obtained as:
\begin{align}
\small
\mathscr{L}_{\mathcal{I}_{\text{in}\mid \mathtt{h}>\tau , \mathtt{n}}}(s)&= \mathbb{E}[e^{-s\mathcal{I}_{\text{in}\mid \mathtt{h}>\tau , \mathtt{n}}}] = \frac{\exp\{- \mathtt{n} \tau s \rho_{\text{L}}\}}{(1+s\rho_{\text{L}})^\mathtt{n}}.
\normalsize
\end{align}
The probability mass function of the number of neighbors  $\mathcal{N}$ which is found in \cite{6576413} as: 
\begin{align}\label{pdf_users11}
\mathbb{P}\{\mathcal{N} = \mathtt{n}\} \approx \frac{\Gamma(\mathtt{n}+c)}{\Gamma(\mathtt{n}+1)\Gamma(c)} \frac{\left(\frac{\delta\; \mu}{n_Z}\right)^\mathtt{n} (\lambda c)^c}{\left(\frac{\delta\; \mu}{n_Z}+\lambda c\right)^{\mathtt{n}+c}}.
\end{align}
Considering that there is only Inter-cell interference when the number of neighbors in the cell is 0, and both of inter-cell and intra-cell interference otherwise we can write equation~(\ref{SINR_RA_2}) as \eqref{eq:AppA_2}. 
	\begin{align}\label{eq:AppA_2}
{P_{\text{{\rm RA}}}}&=  \exp\left\{\!- \frac{\sigma^2 \theta_{\text{{\rm RA}}}}{\rho_{\text{L}}}+\tau \right\} \mathscr{L}_{\mathcal{I}_{\text{out}\mid  \mathtt{h}>\tau}} \left(\frac{\theta_{\text{{\rm RA}}}}{\rho_{\text{L}}} \right) \notag \\ & \quad \times \left[ \mathbb{P} \left\{\mathcal{N}=0 \right\}+\!{\sum\limits_{\mathtt{n}=1}^{\infty}\mathbb{P} \left\{\mathcal{N}=\mathtt{n} \right\} \times \mathscr{L}_{\mathcal{I}_{\text{in}\mid  \mathtt{h}>\tau , \mathtt{n}}}\left(\frac{\theta_{\text{{\rm RA}}}}{\rho_{\text{L}}} \right) } \right].
\end{align}
To take into account the boundaries in \eqref{SINR_RA_2}, we use Gil-Pelaez theorem \cite{Gill}. Therefore, the CDF of the aggregated interference $F_{\mathcal{I}}\left(x \right)$ can be calculated by: 
	\small
\begin{align}\label{eq:CDF_gil}
F_{\mathcal{I}}\left(x \right)=  &\frac{1}{2}-\frac{1}{\pi} \int\limits_{0}^{\infty} \frac{1}{t} \text{Im} \left\{ \exp\left\{ -j \;t \;x \right\} \mathscr{L}_{\mathcal{I}}\left(-j\;t\right)\right\} dt,
\end{align}
\normalsize
where $\mathscr{L}_{\mathcal{I}}$ is the Laplace Transform of the aggregated interference which has the form of: 
 	\small
 \begin{align}\label{eq:lap_I}
\mathscr{L}_{\mathcal{I}}(s)=&  \mathscr{L}_{\mathcal{I}_{\text{out}\mid  \mathtt{h}>\tau}} \left(s\right) \notag \\ & \quad \times \left[ \mathbb{P} \left\{\mathcal{N}=0 \right\}+\!{\sum\limits_{\mathtt{n}=1}^{\infty}\mathbb{P} \left\{\mathcal{N}=\mathtt{n} \right\} \times \mathscr{L}_{\mathcal{I}_{\text{in}\mid \mathtt{h}>\tau , \mathtt{n}}}\left(s \right) } \right].
\end{align}
 \normalsize
After Applying the total probability theorem \eqref{eq:RA_success2} is obtained.

\bibliographystyle{IEEEtran}
\bibliography{IEEEabrv,StringDefinitions,refrences}
\vfill

\end{document}